\newtheorem{theorem}{Theorem}[section]
\title{Fair Division Algorithms for Electricity Distribution}
\author{ 
Dinesh Kumar Baghel \\
Department of Computer Science\\
Ariel University, Israel\\
\texttt{dinkubag21@gmail.com} \\
\And
Vadim E. Levit \\
Department of Mathematics\\
Ariel University, Israel\\
\texttt{vadim.e.levit@gmail.com} \\
\And
Erel Segal-Halevi\\
Department of Computer Science\\
Ariel University, Israel\\
\texttt{erelsgl@gmail.com}\\
}
\begin{document}
\maketitle
\begin{abstract}
In many developing countries, the total electricity demand is larger than the limited generation capacity of power stations. Many countries adopt the common practice of routine load shedding --- disconnecting entire regions from the power supply --- to maintain a balance between demand and supply. Load shedding results in inflicting hardship and discomfort on households, which is even worse and hence unfair to those whose need for electricity is higher than that of others during load shedding hours. Recently, \cite{Oluwasuji2020} presented this problem and suggested several heuristic solutions. In this work, we study the electricity distribution problem as a problem of fair division, model it using the related literature on cake-cutting problems, and discuss some insights on which parts of the time intervals are allocated to each household. We consider four cases: identical demand, uniform utilities; identical demand, additive utilities; different demand, uniform utilities; different demand, additive utilities. We provide the solution for the first two cases and discuss the novel concept of $q$-times bin packing in relation to the remaining cases. We also show how the fourth case is related to the consensus $k$-division problem. 
One can study objectives and constraints using utilitarian and egalitarian social welfare metrics, as well as trying to keep the number of cuts as small as possible. 
A secondary objective can be to minimize the maximum utility-difference between agents. 
\end{abstract}

\keywords{Load Shedding \and Fairness \and Cake-Cutting \and Fractional Approval Voting \and Consensus Division \and Bin Packing}

\section{Introduction}
Electricity is of utmost importance for the development of social and economic sectors. Electricity is perishable and should be consumed as soon as produced; it might be highly expensive to install alternative storage capacity. 
However, in this regard, developing countries face many challenges, like infrastructure, environmental, and sustainability issues \cite{Kaygusuz2012}. Without additional dedicated financial, institutional, and technological policies, some 15\% of the world's population will still expropriate electricity, a majority of which lies in developing Sub-Saharan African countries \cite{Kaygusuz2012}. This lack of electricity is a serious impediment to the development of industries and communities, thereby causing a significant drag on economic growth. In comparison, the United Kingdom generates on an average over 30GW of electricity\footnote{\url{https://gridwatch.co.uk}} for a population of about 68 million people\footnote{\url{https://worldpopulationreview.com/countries/united-kingdom-population}}, whereas Nigeria, a developing country, generates under 8.5GW\footnote{\url{https://www.ceicdata.com/en/indicator/nigeria/electricity-production}} of electricity for a population of over 211 million people\footnote{\url{https://worldpopulationreview.com/countries/nigeria-population}}. Electricity deficiency aggravates further with the growing population, development of backward areas, and the ever-increasing use of digital devices. Because of these factors, emerging countries will continue to face energy issues in the near future. Nigeria is one such example of an acute power crisis  \cite{GatugelUsman2015}. 

It is necessary to have a basic understanding of the electrical distribution grid abstractly. In a typical power distribution grid, power travels from the power plant to the transmission substation. It further travels to the distribution grid, where the power is stepped down so that it can be useful to homes and businesses. After necessary stepping down, the power leaves this grid in different directions via a distribution bus. Lines make up each bus on the network. Each of these lines depicts a group of individual consumers with varying electrical requirements \footnote{\url{http://www.science.smith.edu/~jcardell/Courses/EGR220/ElecPwr_HSW.html}l}.

Due to this significant difference in the electricity demand and generation capacity of emerging countries, and to maintain the desired frequency of electric current (which is 50Hz or 60Hz), disconnecting a substantial proportion of their electrical networks from the supply is the only choice for many developing countries. This event where a portion of an electricity network is deliberately disconnected from the supply is termed \emph{load shedding}. Load shedding, whose sole purpose is to maintain stable balance within the electricity grid, often results in disconnecting more load than is required and seriously impacts the comfort of the households in the shedded region. Load shedding will be present in the near future. Therefore, it is necessary to improve the electricity availability at the household level as much as possible when they need it the most on the condition that the cumulative electricity consumption by the households currently receiving electricity should not exceed the supply. In addition, this household-level electricity allocation reduces waste as compared to conventional load shedding approaches, improves fairness and revenue, and hence increases satisfaction within the system. \cite{Gaur2017} have described how a variety of factors can affect the electricity demand during different hours of the day. Therefore, any electricity allocation solution must consider these heterogeneous preferences in a complex power system.

In light of the above, we model households as agents and time as a resource, where each household has its preference for the consumption of electricity at each time.
The problem is similar to the classic problem of \emph{cake-cutting}, where agents have distinctly separate interests over different parts of the cake.  The objective is to come up with a fair allocation of the cake under these different and conflicting interests of the agents. Let us assume the time interval to be allocated among agents is $[0, T]$, the supply is $\mathcal{S}$, and agent $i$'s demand and utility at time $t$ are $d_i^t$ and $c_i^t$, respectively. Our objective is to compute an egalitarian electricity distribution while minimizing the number of cuts. 

There are four cases to consider, from simplest to most complex.
(1) Identical Demand, Uniform Utilities, (2) Identical Demand, Additive Utilities, (3) Different Demand, Uniform Utilities, and (4) Different Demand, Additive Utilities.
In the first two cases, where all agents have the same demand, the goal is to allocate the time-interval among $n$ agents as per their utility (or comfort or preference) vector such that, at any point in time, the cumulative demand of all agents connected to electricity should not exceed the supply $\mathcal{S}$. 
In the third case, we have shown it is not optimal to connect each agent $1/k$ of the time, where $k$ is the optimal number of bins formed by packing the $n$ agents' demands. This non-optimality leads us to devise the new notion of $q$-times bin packing. 
Finally, in the fourth case, we prove that egalitarian electricity division, in this case, may require $n-1$ cuts and is PPA-hard to compute.

\section{Related Work}
\cite{Pahwa2013} have proposed three load shedding strategies at the bus level: Homogeneous Load Shedding Strategy, Linear Optimization, and Tree Heuristic.
The simple, quick, and inefficient Homogeneous Load Shedding Strategy cuts a certain percentage of load from all buses in the system. It is useful for critical situations, but the amount of load shed is a lot higher than required. Another drawback of this strategy is that buses that may not be harmed by the original failure undergo load shedding.
Linear Optimization uses linear programming to meet as much demand as feasible in the system. However, this strategy results in very few buses in the system, shedding completely or a large percentage of the load. This is an unfair situation of providing electricity at the cost of a few buses. Moreover, this strategy is computationally expensive for large systems. 
They resolved this issue in the proposed tree heuristic strategy, which forms a tree based on the network elements. The tree is formed using the initially failed line. It then disconnects the same percentage from a subset of lines selected in the tree. However, if the formed tree is very small with the constituent lines of the tree not carrying enough load, then this strategy may not suffice to keep the system intact. Moreover, their result shows that this tree heuristic does not work for all the cases.
Although the proposed tree heuristic results in better load shedding as compared to the homogeneous strategy, it still disconnects the load at times when households are running indispensable activities.  		
Therefore, it is more important to take care of the electricity needs at the household level while developing solutions for electricity distribution.

In accordance with this,  \cite{Shi2015} developed a fair load-shedding solution where smart agents (buses on the system) communicate with their neighbors (upstream and downstream) and determine the amount of load they can shed and their corresponding compensation in real-time. Assuming the communicating agents are rational, they used linear models of incentives. Applying such a solution to the electric grid requires advanced information and communication technologies along with modern sensors and intelligent protection applications. However, developing countries require less complex solutions that deal with user-derived heterogeneous preferences. A retrofitted meter is a traditional meter with embedded units to provide the desired communication and control functionality. Recent advances in the design of smart retrofits aimed primarily at developing countries have made it possible to consider the needs of households within the network \hbox{\citep{Azasoo2015, Keelson2014}}. Smart retrofitted meters monitor household demands and plan for electricity ahead of time. Global System for Mobile Communications (GSM) technology, securely transmitting energy consumption data, remote connection, disconnection, and displaying user statistics are some of the features of smart retrofitted meters used to monitor the household demands and plan for electricity ahead. In the same line, \cite{Heggie2018} proposed a solution to allocate as much electricity as possible among the regions in the distribution network. The regulator sets the target proportion of supply to be delivered to each region. The allocation method will try to fulfill two objectives: first, minimizing the load shed, and second, fulfilling the target proportion set for each region as max as possible. At a more granular level, \cite{Azasoo2019} proposed a mechanism that bridges the gap in demand and supply by reducing the energy consumption of user-defined low priority tasks at the household level during peak hours using smart metering technology. They assumed that agents would report their true priorities of appliance usage. It is unfair since it may prohibit agents from performing some essential tasks (assuming all tasks are set to the same priority and hence, equally important) by reducing energy consumption.

 In light of this, \cite{Oluwasuji2018} presented four heuristic algorithms for fairly disconnecting households from the supply. They focused on connecting households as evenly as possible in terms of the number of hours. In evaluating the performance of these alternatives, they used utilitarian, egalitarian, and envy-freeness social welfare metrics. In continuation with this, \cite{Oluwasuji2020} suggested a more fair heuristic method. They formulated the load shedding problem as a multiple knapsack problem (MKP) and solved it using integer linear programming. 
 A significant problem with their approach is that they do not consider the number of cuts in the allocation since this constraint cannot be integrated into their ILP model.
 
As a result, the fair load shedding (or electricity distribution) problem can be represented as a fair resource allocation problem where electricity is the resource to be shared at the household level. As a resource allocation problem \cite{Chao2016} solved the electricity trading issue in a smart grid. From the commercial aspect of a smart grid, they proposed a fair electricity trading method among buyers (smart grid agent demand) and sellers (energy generated from renewable resources within the smart grid). For the buyer, electricity is cheaper to trade than to buy from a power utility. In the case of the seller, it is more profitable to sell other agents within the grid than to sell to a power utility. In addition, \cite{Gerding2011} solved the hybrid electric vehicles (EVs) charging problem based on the preferences reported by the agents (EV owners). Their developed model-free mechanism solved the problem of coordinating EV charging into the electricity grid to prevent overloading. They assumed that valuations are non-increasing for each incremental unit of electricity. Moreover, their mechanism results in certain units remaining unallocated to ensure truthfulness. However, these unallocated units are returned to the grid for other use, but in the case when the purpose is to distribute electricity to households, these unallocated units are necessarily wasted. \cite{Stein2012} extended this approach for pure EVs to model-based online mechanism design, which fulfills the EVs preferences using the pre-commitment notion, i.e., the mechanism is guaranteed to fulfill the demand for the selected agent. Their model assumes that agents have a value for a particular amount of resource, which, in contrast to our model, does not have an additive value for receiving more and values it to $0$ for receiving less. \cite{Miller2012} proposed a multi-agent-based coordination algorithm to optimally incorporate the power outputs of renewable generators in the distribution network. The aforementioned approaches require that agents behave rationally and genuinely state their values to arrive at a solution, whereas, in our model, we are generating these values centrally.

\cite{Chen2012} addressed the issue of fairly allocating cooling to households when the required power to achieve cooling surpasses supply. They predicted the relationship between power and cooling and proposed the min-max and proportional fairness approach. Min-max approach determines the common temperature to set for all the agents. As a result, some agents get the requested cooling, whereas others are at a greater inconvenience. On the other hand, the proportional approach allocates a fraction of their requested temperature and hence the required power to achieve it. However, it is prone to manipulation as agents can demand a lower temperature than they require. 

A different related problem where the amount of available resource is variable has been studied by \cite{Buermann2020}. In their setting amount of available energy is distributed at every time step, whereas our focus is on when to allocate energy. In their paper, the valuation function of the agents is a linear satiable function. It shows that an agent derives some usefulness even in the case of partial fulfillment of the demand, whereas in our case of piecewise constant functions, an agent's utility is significant only if the requested demand is fulfilled; otherwise, it is zero. Practically we can explain it as follows: suppose at some time a household is running some appliance that requires 1kW of electricity to operate; otherwise, it will not function. During that time, this was the only activity performed by the household. In this case, during that time, there is no point in fulfilling the partial demand of the agent as it is of no use to him. It is as good as giving no electricity.

In another paper, researchers have combined game-theoretic bankruptcy rules with Nash bargaining to solve the power allocation problem at the province level \cite{Janjua2021}. However, at the agent level, their solution is not feasible as it allocates a fraction of the demand to agents. It may be the case that in the absence of full demand, an agent may not be able to finish some critical work. \cite{Ali2021} have proposed a mechanism that solves the load shedding in the case when loads can be shifted to some other hour of the day. However, their solution is not practical in the real world as there are many activities that can not be shifted to a different time. In the mechanism design proposed in \cite{akasiadis2017mechanism}, the agents optionally commit to shift a fraction of their load by specifying their shifting capacity, shifting costs, and the non-peak interval to shift. The mechanism proposed in this paper shifts the peak load to some non-peak intervals. It does so by first defining a marginal shifting quantity corresponding to non-peak intervals and then selecting a group of agents whose sum is within this marginal quantity. Mechanism incentives agents for their truthful commitments. However, if necessary, this shifting can cause some critical work to shift to some other time resulting in a higher impact on their comfort hence making this approach less relevant. Moreover, it may happen that coalition strategies used by the mechanism can be unfair as it may select a few agents again and again, and some agents are never selected.

We focus on cake cutting as a classic resource allocation problem to solve the fair electricity distribution problem. The problem was first imposed by Steinhaus \cite{10.2307/1907319}. In cake cutting, a cake is a metaphor for the resource. However, we treat the time interval as a cake, unlike the previous approaches. A cake-cutting mechanism allocates the divisible resource to agents with different valuation functions (or preferences) according to some fairness criteria. A number of cake cutting protocols have been discussed in \citep{brams_taylor_1996, robertson1998cake, Rothe2016}.  In our problem, time-interval will be allocated among the agents without violating the supply constraint. The solution to our problem differs from the classic cake-cutting algorithms in the sense that the sum of the demands of all the agents connected at time $t$ should respect the supply constraint, and several agents may share the same piece. 

Our goal is to develop fair electricity distribution solutions for households as the residential sector forms a major part of the grid's demand. For example, in Nigeria, the residential sector represents 51.3\% of the grid demand \cite{Nwachukwu2014}. This residential electricity consumption is expected to rise further \cite{AmazuiloEzenugu2017} with the increase in living standards. Modeling the solution at the household level will result in better grid conditions and energy situations. We use the dataset representative of Nigeria's household energy consumption as in \cite{Oluwasuji2020}.

\section{Fair electricity distribution model for households and notations}

\subsection{Supply}
We denote by $\mathcal{S} \in \mathbb{R}_{> 0}$, the total supply of electricity available for distribution. It is (usually) measured in kilowatts.
We assume, for simplicity, that the supply is constant over time. 
In general, there can be alternative renewable energy sources on the network. The power generated from these resources is typically weather dependent and hence highly variable in nature, which makes it difficult to know the exact amount of power at the time of decision-making. 
Therefore, power generated from these resources is not considered in the current work.
In the future, we may consider a more general model, in which the supply can change with time.

We model the distribution of electricity as an allocation of a time-interval $\mathcal{C} = [0, T]$. For example, $T$ can denote a single hour, day, week, etc.

\subsection{Demand}
We model each household as an agent. We denote the set of agents by $[n]$, where $[k] = \{1, \ldots ,k\}$, and $n$ is the number of agents.
We define, for each agent $i$, the hourly demand at time $t$ as $w_i^t$. 
Initially, we assume that the demand is static, that is, $w_i^t = w_i$ for all $t\in[0,T]$.
We hope to generalize the model to time-dependent demand in the future.

Each agent $i$ should receive a subset $X_i$ of the cake $[0, T]$, which means that the agent is connected to electricity during the time represented by $X_i$.
We denote an allocation by $X := (X_1,\ldots,X_n)$.
In the classic cake-cutting problem, the pieces allocated to different agents must be pairwise-disjoint. 
This is an important way in which our problem generalizes the cake-cutting problem: in electricity division,  several agents may be connected at the same time. The requirement is that, at each point in time $t\in [0,T]$, the total demand of all agents connected at time $t$ is at most the supply $\mathcal{S}$.

We assume that there are no transmission constraints imposed by the grid to distribute the supply generated. That is, the only limit on the allocation is the total supply $\mathcal{S}$.

\subsection{Utility}
Each agent $i$ has a utility function $u_i$, which assigns to each time-interval $Z\subseteq [0,T]$, the utility that the agent gains from being connected to electricity at $Z$.%
\footnote{
\citet{Oluwasuji2020} use the term ``comfort'' instead of utility. They derive the utility of agents by averaging the demand over the past four weeks, and normalizing it by dividing by the maximum value. We consider utility and demand to be two different and independent inputs.
}

We assume that the agent utilities are additive over time, that is, $u_i(Z \cup Y) = u_i(Z) + u_i(Y)$ if $Z$ and $Y$ are disjoint time-intervals.
We also assume that $u_i$ is non-atomic, that is, $u_i([t,t]) = 0$ for all $t\in [0,T]$.

\subsection{Fairness and Efficiency Criteria}
We use three measures for the quality of an allocation:
\begin{itemize}
\item Its egalitarian welfare $eg$, defined as:
$eg :=  \underset{i \in [n]}{\min} \; u_i(x)$;
\item Its utilitarian social welfare $ut$, defined as: $ut := \sum_{i=1}^{n} u_i(x)$;
\item 
The maximum utility-difference $ef := \{\max_{i,j} \{\lvert u_i(x) - u_j(x)\rvert\}\}$. 
\end{itemize}

\subsection{Information considerations}
We assume that the demand of each agent is known to the divider. This is a reasonable assumption, as often the households are equipped with smart meters, from which we can compute their demands.
Additionally, the demand of each agent is often fixed in advnace acccording to the type of connection between the agent's house and the power grid.
Therefore, there are no strategic issues in reporting the demands.
In the future, one may consider mechanisms that incentivize agents to truthfully report the actual amount of electricity that they need, but in this work we assume that the demand is given.

We are also not considering strategic considerations in reporting the utility functions. 
Recently, \cite{Tao2021} has proved the nonexistence of a truthful cake cutting mechanism even in the simple case of two agents with piecewise-constant and strictly positive utilities. 
Truthful cake-cutting protocols are known only for very special cases \citep{Chen2013, Bei2020, Alijani2017}.

Since our problem is a generalization of cake-cutting, truthful mechanisms are definitely not possible.

\subsection{Key assumptions}
We make the same assumptions as in \mbox{\cite{Oluwasuji2020}} to solve our problem. For completeness, we describe these assumptions here:
\begin{enumerate}
	\item The households are equipped with smart retrofitted meters that allow to connect or disconnect individual households at each time.
	\item The estimates of demands we receive from households are correct.
	\item The utilities are unrelated to electricity distribution events: agents may not be aware of upcoming energy-distribution events, and so do not conduct some operations in advance.
\end{enumerate}

\section{Fair Electricity Distribution Algorithms}
In this section, we will address the fair electricity distribution problem from the simplest to the most general case. 
We discuss the following cases:
\begin{enumerate}
\item Identical Demands, Uniform Utilities
\item Identical Demands, Additive Utilities
\item Different Demands, Uniform Utilities
\item Different Demands, Additive Utilities
\end{enumerate}
\subsection{Identical Demands, Uniform Utilities}
In this simple version of the problem, all agents have the same demand $d$, and their utilities are uniform for each time interval. Hence, agents' only interest lies in how much time they are connected. Assuming the total demand is greater than the supply, only $q = \lfloor \mathcal{S}/d \rfloor$ agents can be connected at any point in time. 
Each agent is connected a fraction $q/n$ of the time, and therefore each agent's utility is $q/n$ of the total utility. 

\subsection{Identical Demands, Additive Utilities}
In this case, agents still have identical demands, but may assign different utilities to different time-intervals of the cake $\mathcal{C}$. Since the agents' demands are same, again only $q = \lfloor \mathcal{S}/d \rfloor$ agents can be connected at any point in time. So, we can create $q$ adjacent copies of the cake $\mathcal{C}$. Let us call this cake $\mathcal{C'}$. Now divide the cake $\mathcal{C'}$ using the Even-Paz cake-cutting algorithm \cite{Even1984}. 
The algorithm guarantees that the allocation of $\mathcal{C'}$ is \emph{proportional}, which means that the utility for an agent $i$ is $U_i \geq \frac{1}{n}U_i(\mathcal{C'} \geq \frac{q}{n}U_i(\mathcal{C})$. 
Thus, we can guarantee the same egalitarian welfare as in the case of identical demands and uniform utilities.

As an example assume there are $n = 4$ agents, each having a demand of $2$. Assume cake $\mathcal{C}$ is $[0,2]$ and the supply is $\mathcal{S} = 4$. 
The agents' utilites are piecewise-constant:

\begin{center}
	\begin{tabular}{|c|c|c|}
		\hline
		Piece & [0,1] & $[1,2]$ \\ \hline
		Agent 1: & 0.8 & 0.2 \\ \hline
		Agent 2: & 0.2 & 0.8 \\ \hline
		Agent 3: & 0.7 & 0.3 \\ \hline
		Agent 4: & 0.3 & 0.7 \\ \hline
	\end{tabular}
\end{center}
Only $q = \lfloor 4/2 \rfloor = 2$ agents can be connected at any point of time. Now, we construct a cake from $2$ adjacent copies of the interval $\mathcal{C}$. Let's call this cake as $\mathcal{C'} = [0,4]$. 

Correspondingly, agents utilities will consist of 2 adjacent copies of their respective original utilities, i.e. $u_1^{'} = (0.8,0.2,0.8,0.2), u_2^{'} = (0.2,0.8,0.2,0.8), u_3^{'} = (0.7,0.3,0.7,0.3), u_4^{'} = (0.3,0.7,0.3,0.7)$ Upon applying the Even-Paz cake cutting on $\mathcal{C'}$ and then mapping the allocation to original cake $\mathcal{C}$, agents allocations are as shown in Figure 1.
\begin{figure}
\centering
\begin{subfigure}[b]{0.55\textwidth}
\includegraphics[width=\linewidth]{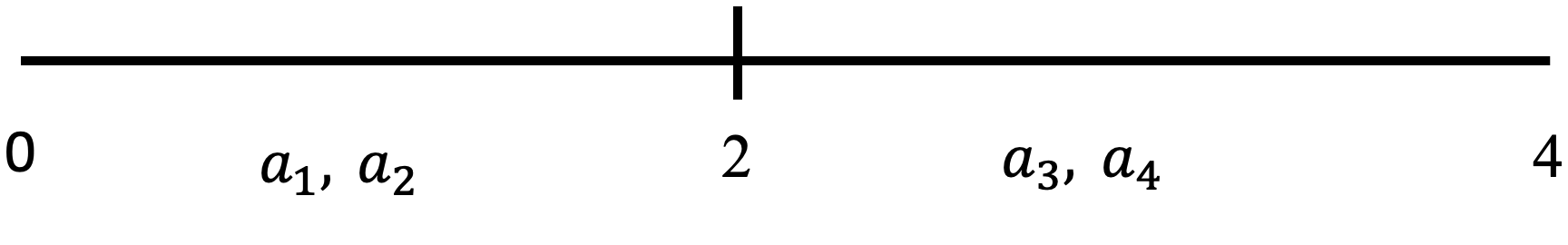}
\caption{} \label{fig:1a}
\end{subfigure}%

\begin{subfigure}[b]{0.55\textwidth}
\includegraphics[width=\linewidth]{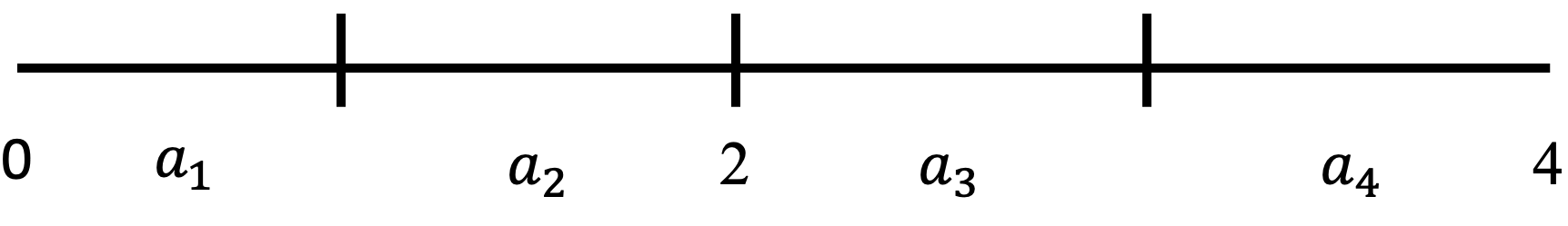}
\caption{} \label{fig:1b}
\end{subfigure}%

\begin{subfigure}[b]{0.35\textwidth}
\includegraphics[width=\linewidth]{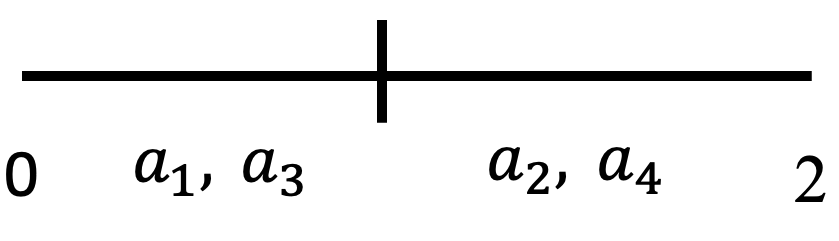}
\caption{} \label{fig:1c}
\end{subfigure}

\caption{Applying Even-Paz to $\mathcal{C'}$: (a) First Iteration, (b) Second Iteration. (c) Mapping $\mathcal{C'}$ to $\mathcal{C}$ and the corresponding allocations.} \label{fig:1}
\end{figure}
\subsection{Different Demands, Uniform Utilities}
A slightly more complicated version emerges if agents have different demands but uniform utilities. For example, we have $n = 3$ agents with uniform utilities. Let's assume supply is $\mathcal{S} = 30$ and the agent's demands are $10, 20$ and $30$, respectively. 

We define a \emph{maximal feasible set} as a set of agents, the sum of whose demands do not exceed the supply, and additional agents cannot be added to it without exceeding the supply.
The maximal feasible sets in the above example are: $\{10, 20\}, \{30\}$.
Obviously, only maximal feasible sets should be considered for allocation, otherwise there is unused supply.
 Now the question is \emph{what is a fair allocation of time in this case?} 

This problem can be represented as a problem of \emph{fractional approval voting}, where each of the agents $i \in [n]$ approves one or more candidates, and the output is a distribution over all the candidates. Fractional approval voting and the equivalent terms have been studied in \citep{Aziz2019, Bogomolnaia2005, Duddy2015, Brandl2021}.
In our case, the candidates are the maximal feasible sets, and each agent $i$ ``votes'' for all candidates containing $i$.

The most basic fairness condition in fractional approval voting is Individual Fair Share (IFS). IFS says that each agent $i$ should get a utility of $\geq 1/n$ of his/her total utility. In the cake-cutting literature, this condition is called ``proportionality''.
In our case, it means that each agent should be connected at least $1/3$ of the time. There are many allocations that meet this condition in our example, e.g. allocating $1/2$ of the time to each maximal feasible set, or allocating $2/3$ to the set $\{10,20\}$ and $1/3$ to the set $\{30\}$.

But, the IFS condition is too weak:
as seen in the case of equal demands, it is possible to give each agent a utility of $q/n$, which is usually much larger than $1/n$. There are two ways to strengthen IFS: Egalitarian and Group Fair Share. 

One way is to form groups of agents and treat each group as an individual agent. Our goal is to give each agent an allocation $\geq r$, where $r$ is the largest fraction possible given the agent demands.
		We call this approach the Egalitarian approach. 
In the example above, it allocations $1/2$ to each maximal feasible set.

Another way is the notion of Group Fair Share (Group FS or GFS) which says that for a group of $q$ agents, the total allocation should be at least $\geq q/n$
In the example above, it allocations $1/3$ to the $\{30\}$ and $2/3$ to the $\{10,20\}$.

To illustrate a potential problem with the GFS criterion, consider the scenario where the groups are $\{\epsilon,30-\epsilon\}$ and $\{30\}$. GFS still results in allocating the groups $2/3$ and $1/3$ respectively, despite the fact that $30-\epsilon$ and $30$ are almost the same, and hence this allocation is apparently unfair for the agent with a demand of $30$. Moreover, for a group like $\{5,10,15\}$ and $\{30\}$, the GFS allocation will turn out to be $3/4$ and $1/4$ respectively, which is even more unfair for a large family having demand $30$.

Minimalistically, we are interested in the ``equal treatment of equals" fairness principle. Therefore, in the egalitarian approach, our goal is to maximize $r$ so that each agent can be connected at least $r$ of the time. Finding $r$ is an NP-hard problem (by reduction from Partition). In contrast to fractional approval voting, the number of candidates is exponential in $n$. Therefore, it is not possible to check all candidates. It leads us to look for other techniques like Bin Packing algorithms. In bin packing, we are given a set of items (agents), and the goal is to pack this set of items into a minimum number of bins of size $\mathcal{S}$, where the size of each agent is its demand.
Let us assume the number of bins in a bin packing solution for a problem instance $I$ is $k$. Then each agent in $I$ can be connected $1/k$ of the time. However, this lower bound is not optimal. For example, if $\mathcal{S} = 2$ and the agent demands are $ = {1,1,1}$, then bin packing results in $k = 2$, which gives a lower bound of $r = 1/2$. But $r = 2/3$ is possible, by connecting $\{1,2\}$, $\{2,3\}$ and $\{3,1\}$ for $1/3$ of the time each.

This leads us to a new problem, which we call \emph{$q$-times bin packing}: given the set of items, the goal is to pack these items into $k$ bins, of size $\mathcal{S}$ each, such that each item appears in $q$ different bins. We can connect each bin $1/k$ of the time; thereby, each agent is connected for at least $q/k$ of the time. Therefore, the following research questions arise:
\begin{enumerate}
\item Which bin-packing algorithms can be adapted to $q$-times bin-packing?
\item Is the $q$-times bin-packing lower bound always tight for some $q \geq 1$?
\item Which efficient algorithms can be used to find a group-FS allocation?
\end{enumerate}

\subsection{Different Demands, Additive Utilities}
In the more realistic scenario, agents have different demands with additive utilities. 
What fairness guarantees are reasonable in this setting?

Let $r$ be the egalitarian value of the simplified problem in which agents have \emph{uniform} utilities. Can we guarantee each agent a utility $\geq r$ with \emph{additive} utilities? 
The answer is positive: the utilities in each minute are essentially uniform, so we can partition each minute using the egalitarian solution, and guarantee each agent a fraction of at least $r$ in each minute.

Nevertheless, the problem is that this allocation requires too many cuts. 
However, due to practical considerations, we would like to find a solution involving fewer cuts. 
The previously posed question can now be rephrased: Can we guarantee each agent a utility $\geq r$ with \emph{a small number of cuts}?

Let us assume that we have an egalitarian solution $r = q/k$ where $k$ is the number of bins from $q$-times bin-packing. Insights from the $k$-consensus division problem can help answer the question. A \emph{$k$-consensus division} is a partition of cake into $k$ pieces such that each agent $1, \ldots ,n$ values those pieces at exactly $1/k$. 
This partition can be achieved using a bounded number of cuts ($\#$cuts $\leq n)(k-1)$)\cite{Alon1987}.

Now, we can find a $k$-consensus division among agents $1,\ldots,n-1$ using $(n-1)(k-1)$ cuts, and let agent $n$ choose the $q$ pieces he prefers. 
We can now connect each bin (--- maximal feasible set) $1/k$ of the time, and since each agent appears in $q$ bins, the utility for agent $n$ is $\geq q/k$, and all other agents will get exactly $q/k$. 
However, the problem is NP-hard when $k = 2$, even when all valuations are piecewise constant. The following results hold for $k=2$:
\begin{itemize}
\item Computing an approximate $k$-consensus division with the minimal number of cuts ($n(k-1)$) is PPA-hard \citep{Filos-Ratsikas2018a, Filos-Ratsikas2020}.
\item Deciding whether there exists a $k$-consensus division with $n(k-1) - 1$ cuts is NP-hard \citep{Filos-Ratsikas2018b}
\end{itemize}
For $k=3$, we know that consensus division is PPAD-hard \citep{Filos-Ratsikas2020}. However, no such results are known for $k>3$ since more cuts are available. We only know that the problem is in PPA-k for any $k$ which is a power of a prime integer \citep{Filos-Ratsikas2021}. These hardness results can be summarized as:
\begin{center}
\begin{tabular}{|c|c|c|c|}
\hline
\textbf{\#pieces $k$} & \textbf{Agents} & \textbf{Cuts} & \textbf{Hardness} \\	\hline
2 & $n$& $n$ & PPA-hard \\	\hline
3 & $n$ & $2n$ & PPAD-hard \\	\hline
prime power & $n$ & $n(k-1)$ & belongs to PPA-k \\
\hline
\end{tabular}
\end{center}
So the questions are:
\begin{enumerate}
\item[(a)] Can electricity division be done with \textit{fewer} cuts?
\item[(b)] Can electricity division be done more \textit{efficiently}?	 
\end{enumerate}

Formally, we define the \emph{egalitarian electricity division problem} as follows.

A solution to the electricity division problem is a distribution, $E$, of the time intervals for each agent. The \emph{egalitarian welfare} of this distribution is the agent with the lowest utility: $eg(E) = \min_{i \in [n]}  \{u_i(E)\}$. 

INPUT: $n$ agents with different demands and different valuations over a time-interval; supply $\mathcal{S}$.

OUTPUT: a feasible allocation of connection time among the agents, such that $eg(E)$ is maximum among all feasible allocations.

\begin{theorem}
\label{4.1}
Egalitarian electricity division among $n$ agents:
\begin{enumerate}
\item[(a)] may require $n-1$ cuts
\item[(b)] is PPA-hard to compute.
\end{enumerate}
\end{theorem}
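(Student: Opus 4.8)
The plan is to handle the two parts by different means: for (a) I would exhibit a single worst-case instance and count cuts directly, and for (b) I would reduce from consensus halving.

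\paragraph{Part (a).} Since uniform utilities are a special case of additive ones, it suffices to find a \emph{Different Demands, Uniform Utilities} instance forcing $n-1$ cuts. Take $\mathcal{S}=n-1$ and $w_i=1$ for every agent, and normalize $T=1$. Then at most $n-1$ agents can be connected at once, so every maximal feasible set is an $(n-1)$-subset $[n]\setminus\{i\}$, and at least one agent is disconnected at each instant. Writing $o_i$ for the total time agent $i$ is off, we have $\sum_i o_i=\int_0^1(\#\text{agents off at }t)\,dt\ge 1$, so the egalitarian value $\min_i(1-o_i)=1-\max_i o_i$ is at most $1-\tfrac1n$, and this is attained only when $\sum_i o_i=1$ (exactly one agent off at each instant, i.e.\ only maximal feasible sets are used) and every $o_i=\tfrac1n$.

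Consequently, in \emph{every} egalitarian-optimal allocation the map sending $t$ to the unique agent off at time $t$ takes all $n$ values, each on a set of measure $\tfrac1n$; a piecewise-constant map attaining $n$ distinct values has at least $n-1$ jumps, so at least $n-1$ cuts are needed. Conversely, assigning the interval $[(i-1)/n,\,i/n]$ as the off-period of agent $i$ achieves the optimum with exactly $n-1$ cuts. This proves (a).

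\paragraph{Part (b).} Here I would reduce from $\epsilon$-\emph{Consensus-Halving}, which is PPA-hard even for piecewise-constant valuations at the minimal cut budget \citep{Filos-Ratsikas2018a, Filos-Ratsikas2020}. Given $m$ agents with valuations $v_1,\dots,v_m$ on $[0,1]$, duplicate each into two electricity agents $i^{+}$ and $i^{-}$ sharing the valuation $v_i$, and choose demands and supply so that the connected set at each instant is forced to be one of two complementary groups $B^{+}=\{i^{+}\}_i$ and $B^{-}=\{i^{-}\}_i$. A feasible allocation is then described by the set $P^{+}$ of times during which $B^{+}$ is on: agent $i^{+}$ receives utility $v_i(P^{+})$ while $i^{-}$ receives $v_i([0,1]\setminus P^{+})=1-v_i(P^{+})$. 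Maximizing egalitarian welfare therefore drives $v_i(P^{+})\to\tfrac12$ for every $i$, i.e.\ $P^{+}$ is an (approximate) consensus halving, and the cut points of the allocation are exactly the boundary points of $P^{+}$. Hence computing an egalitarian-optimal allocation with the minimal number of cuts solves the halving within its minimal cut budget, and the $\epsilon$-approximation transfers.

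\paragraph{Main obstacle.} The crux is the demand gadget. A naive additive-demand construction \emph{cannot} restrict the maximal feasible sets to the two complementary groups $B^{+},B^{-}$: a short averaging argument shows that for $m\ge2$ one cannot simultaneously make $B^{+}$ and $B^{-}$ feasible while forbidding every mixed feasible pair. Yet forcing a single \emph{common} two-part split shared by all agents is essential, since allowing each duplicated pair to split independently makes the instance decompose and become easy. Engineering this — e.g.\ via auxiliary filler agents or a hierarchical/weighted demand structure — while checking that the padding does not inflate the cut count (so that the hardness of minimal-cut halving is preserved) is where essentially all the difficulty lies.
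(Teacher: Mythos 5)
Your part (a) is correct, and it takes a genuinely different route from the paper. The paper proves both (a) and (b) with a \emph{single} reduction: on its constructed instance, any egalitarian-optimal electricity division forces an exact $2$-consensus division among $n-1$ agents, and it then invokes \cite{Alon1987} for instances where such a division needs $n-1$ cuts. You instead give a direct, self-contained counting argument on a uniform-utility instance with unit demands and $\mathcal{S}=n-1$: the averaging bound $\sum_i o_i \ge 1$ pins the optimum at $1-\tfrac1n$, attained only when almost everywhere exactly one agent is off and every off-set has measure $\tfrac1n$, so any optimal allocation with finitely many cuts has at least $n$ constancy intervals of the off-agent map, hence at least $n-1$ cuts. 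This is more elementary (no topological existence theorem is needed) and shows the cut lower bound already in the uniform-utility, even identical-demand, regime; the paper's route buys a single instance family witnessing both the cut bound and the hardness simultaneously. One point to make explicit: the lower bound quantifies over allocations consisting of finitely many intervals, so that the off-agent map is piecewise constant --- your argument implicitly assumes this.

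For part (b), the gap you flagged is real, and it is exactly where the paper's construction diverges from yours. The paper does \emph{not} duplicate agents into complementary groups. Given consensus-halving valuations $v_1,\dots,v_{n-1}$, it keeps those $n-1$ agents with demand $1$ each, sets $\mathcal{S}=n-1$, and adds one extra agent with demand $n-1$ --- the \emph{entire} supply --- whose valuation is the average $(v_1+\cdots+v_{n-1})/(n-1)$. Feasibility forcing is then immediate: the big agent cannot be connected together with anyone, and the small agents together exactly saturate the supply, so the only maximal feasible sets are $\{1,\dots,n-1\}$ and $\{n\}$, and an allocation is a two-part split $(X_1,X_2)$ of the interval. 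The role your second group $B^{-}$ was supposed to play (penalizing $v_i$-values above $\tfrac12$ on the complement) is played instead by the averaging valuation: since an exact halving exists, the optimal egalitarian value is at least $\tfrac12$; in any optimal allocation each small agent has $v_i(X_1)\ge \tfrac12$, hence $v_i(X_2)\le \tfrac12$, while $v_n(X_2)$, being the average of the $v_i(X_2)$, must be $\ge \tfrac12$, which forces $v_i(X_2)=\tfrac12$ for every $i$. So the ``common two-part split'' you correctly proved cannot be enforced by a symmetric additive-demand gadget \emph{is} enforceable by an asymmetric one-vs-all structure plus an averaged valuation; that single averaging agent is the missing idea. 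If you pursue the approximate version, note the averaging step loses a factor linear in $n$: egalitarian value $\ge \tfrac12-\epsilon$ yields $v_i(X_2)\ge \tfrac12-(2n-3)\epsilon$, which is harmless for the precision regimes in \citep{Filos-Ratsikas2018a, Filos-Ratsikas2020} but should be stated.
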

\begin{proof}
The proof is by reduction from $2$-consensus division.

Given a $2$-consensus division problem with $n-1$ agents with valuations $v_1, \ldots, v_{n-1}$, construct an electricity division problem with $n$ agents:
\begin{itemize}
\item The supply $\mathcal{S} = n-1$;
\item There are $n-1$ agents with demands $1$ and valuations = $v_1, \ldots , v_{n-1}$;
\item There is one agent with demand $n-1$ and valuation $(v_1 + v_2 + \ldots + v_{n-1})/(n-1)$
\end{itemize}
In this instance, there are only two maximal feasible sets: $\{1,\ldots,n-1\}$ and $\{n\}$. 
Therefore, the egalitarian division can be written as $(X_1,X_2)$, where $X_1$ is the interval in which the set 
$\{1,\ldots,n-1\}$ is connected, 
and $X_2$ is the interval in which the set 
$\{n\}$ connected.

In each maximal feasible set, the valuation of all agents is the same. Therefore, by classic cake-cutting results, it is possible to guarantee each agent a utility of at least $1/2$.
So the optimal egalitarian value is at least $1/2$.

All agents in the first feasible set value the first piece as at least $1/2$, i.e.
\[\forall i = 1,\ldots,n-1: v_i(X_1) \geq 1/2\]
and by additivity, values the second piece as at most $1/2$:
\[\forall i = 1, \ldots, n-1: v_i(X_2) \leq 1/2\]
The second piece $X_2$ is given to the $n^{th}$ agent whose valuation is the average of the valuation of all other agents. Its value is at least the optimal egalitarian value: $v_n(X_2) \geq 1/2$, but $v_n(X_2)$ is the average of all $v_i(X_2): \forall i = 1,\ldots,n-1$. This implies $v_i(X_2) = 1/2, \forall i = 1,\ldots,n-1$. Again by additivity, 
\[\forall i = 1,\ldots,n-1: v_i(X_1) = 1/2\]
So $(X_1,X_2)$ solves the problem of $2$-consensus division. Thus, consensus division among $n-1$ agents is a special case of egalitarian electricity division among $n$ agents (when the egalitarian value is $1/2$).

It is known that a 2-consensus division among $n-1$ agents may require $n-1$ cuts \cite{Alon1987},
and it is PPA-hard to find an approximate egalitarian division. 
\end{proof}
\noindent This leads us to the following research questions:
\begin{itemize}
\item[(a)] Are $n-1$ cuts (which might be necessary by the reduction from consensus division) always sufficient for egalitarian electricity division?
\item[(b)] How many cuts are needed for a group-FS division?
\item[(c)] Can we heuristically find an egalitarian/group-FS division with few cuts on realistic instances?
\end{itemize}

\section{Conclusion and Future Work}
In this paper, we studied the fair electricity division problem. It is related to several classic problems, like proportional cake cutting, bin packing, fractional approval voting, and consensus division. We have considered the four different cases for demand and utilities, from the simple one to the more generic one. We have provided the solution for the rather simplistic case of identical demand with uniform/additive utilities. We have seen how the novel concept of $k$-times bin packing relates to the remaining cases of varying demand with uniform/additive utilities. We have also shown that in the more general case of different demand and additive utilities, egalitarian electricity distribution may require $n-1$ cuts and is PPA-hard to compute \ref{4.1}. Ideas that we will develop here may apply to other settings of social choice or fair division in which agents may have \textit{different sizes}.

From a broader point of view, this paper shows how fair electricity distribution and related problems are complex and should be studied more. Some of the results from our work can be applied to the settings where many agents organized in overlapping groups want to attend or access the resource of limited capacity \cite{arbiv2022fair}. Our basic model of fair electricity distribution can be extended to include more supply constraints, such as supply $\mathcal S$ changing over time; a supply network with a capacity constraint on link capacities; OR it can be extended to include more complex demands, such as demand $d_i$ changing over time, with price acting as an incentive to agents to consume less electricity during more overloaded hours. Another interesting direction would be extending this work to a more general valuation function class.

\section{Acknowledgment}
We want to thank Dr. Aris Filos-Ratsikas, Dr. Enrico Gerding, and Prof. Amos Yehuda Azaria for invaluable feedback and comments that helped improve this research proposal.

\bibliographystyle{unsrtnat}
\bibliography{researchprop2.bib}

\end{document}